	\theoremstyle{nonumberplain}
	\newtheorem{proof}{Proof}
\DeclareMathOperator{\Span}{span}
\DeclareMathOperator{\Supp}{supp}
\DeclareMathOperator{\Wt}{wt}
\newtheorem{theorem}{Theorem}
\newtheorem{lemma}{Lemma }
\newtheorem{definition}{Definition}
\newtheorem{prop}{Proposition}
\newtheorem{example}{Example}
\newtheorem{corollary}{Corollary}
\newtheorem{remark}{Remark}
	\newtheorem{algorithm}{Algorithm }
	\title{\textbf{Some Results on Locally Repairable Codes with Minimum Distance $7$ and Locality $2$}}
	\date{}
	\author[1]{Yuan Gao \thanks{Email: 51205500063@stu.ecnu.edu.cn.}}
	\author[1]{Siman Yang \thanks{Email: smyang@math.ecnu.edu.cn(Corresponding author).}}
	\affil[1]{\footnotesize
		 Department of Mathematics, East China Normal University, Shanghai, 200241, China}
\begin{document}	
		\maketitle
		\small
		\begin{multicols}{2}
			\begin{abstract}
				{\bf Locally repairable codes(LRCs) play 
				important roles in distributed storage systems(DSS). LRCs with small 
				locality have their own advantages since fewer available symbols are 
				needed in the recovery of erased symbols. 
				In this paper, we prove an upper bound on the dimension of LRCs with minimum distance $d\geq 7$.
An upper bound  on the length of almost optimal LRCs with 
				$d=7$, $r=2$ at $q^2+q+3$ is proved. Then based on the $t$-spread structure,
				we give an 
				algorithm to construct almost optimal LRCs with $d=7$, $r=2$ and length $n\geq 3\lceil\frac{\sqrt{2}q}{3}\rceil$ when $q\geq 4$, whose dimension attains the aforementioned upper bound. \rm}\\ \par\
			\bf\emph{Index Terms}-Locally repairable codes, distributed storage systems, $t$-spreads.\rm
			\end{abstract}

			\section{Introduction}
		To meet the growing needs for efficient and reliable cloud storage systems and big data storage systems, distributed storage systems have been widely used. In distributed storage systems, a data is partitioned and stored in different
		data storage devices. Inevitably, data loss may occur. Erasure codes are introduced to prevent data loss, such as locally repairable codes.
			The notions of \emph{the locality of code symbols} and the formal definition of \emph{locally repairable codes} were introduced in \cite{gopalan2012locality} and \cite{papailiopoulos2014locally}. The bounds and constructions of locally repairable codes have attracted the attention of many researchers in 
			recent years.
			
			Let $q$ be a prime power. 
			An $[n,k,d]_q$ linear code $\mathcal{C}$ is called a locally repairable code with locality $r$ if for every $i\in \{1,2,\dots,n\}$, there exists a codeword $\boldsymbol{c'}$ in its dual code $\mathcal{C}^{\bot}$ such that $\Wt(\boldsymbol{c'})\leq r+1$ and $i\in \Supp(\boldsymbol{c'})$.
			The well-known Singleton-like bound
			\begin{eqnarray}\label{eq1}
				d\leq n-k-\lceil\frac{k}{r}\rceil+2
			\end{eqnarray}
			was proved in \cite{gopalan2012locality}.
			
			LRCs whose parameters attain bound \eqref{eq1} are called  optimal LRCs.
			LRCs with parameters satisfying $d=n-k-\lceil\frac{k}{r}\rceil+1$ are called almost optimal LRCs. 
			In the works \cite{guruswami2019long},\cite{chen2020improved}, the upper bounds on the length of optimal LRCs with fixed $d,r$ were proved.  
			There were many constructions of optimal LRCs,
			e.g.,  \cite{guruswami2019long}, \cite{chen2020improved}, \cite{tamo2014family}, \cite{li2018optimal}, \cite{luo2018optimal}, \cite{jin2019explicit}, \cite{xing2018construction}.
			In \cite{guruswami2019long}, LRCs with $d=3,4$ and unbounded length were constructed, and for any $5\leq d\leq r+2$, optimal LRCs with length at least $cq^{1+\frac{1}{\lfloor(d-3)/2\rfloor}}$ were constructed, where $c$ is a constant that only depends on $d$ and $r$.
			By polynomial evaluation, \cite{tamo2014family} constructed optimal LRCs with $n\leq q$. It is worth noting that the  constructions in \cite{tamo2014family} only require $r,k$ to satisfy $r\leq k$ and $\frac{k}{r}\leq \frac{n}{r+1}$. By the algebraic structure of elliptic curves,   optimal LRCs with length up to $q+\sqrt{q}$ were constructed in \cite{li2018optimal}. Optimal cyclic LRCs with $d=3,4$ and unbounded length were constructed in \cite{luo2018optimal}. In \cite{jin2019explicit}, via binary constant weight codes, optimal LRCs with $d=5,6$ and length greater than $cq^{2}$ were explicitly constructed, where $c$ is a constant that only depends on $d,r$.
			In \cite{xing2018construction}, the connections between optimal LRCs with Vandermonde-like parity-check matrices and hypergraphs were established. Practical algorithms were given to obtain the desired parity-check matrices of optimal LRCs, whose length is at least $cq^{1+\frac{1}{\lfloor(d-3)/2\rfloor}}$, where $c$ is a constant that only depends on $d$ and $r$.
		In \cite{chen2020improved}, the optimal LRCs with $d=6$, $r=2$ and disjoint local repair groups were completely characterized, then optimal LRCs with lengths $3(q+1),3(q+\sqrt{q}+1)$ and $3(2q-4)$ were explicitly constructed.
%
			However, as will be shown in Lemma \ref{lem2},  there is no optimal LRC with $d=7$, $r=2$ and disjoint local repair groups.
			Thus, we attempt to construct almost optimal LRCs with $d=7$, $r=2$ and disjoint local repair groups. Some upper bounds are also derived in the process.
			
			The remainder of the paper is organized as follows. In Section 2, we review some definitions and necessary results. In Section 3, we give our main results as follows:
			
			(1)An explicit upper bound on the dimension of LRCs with minimum distance $d\geq 7$ and disjoint local repair groups. This upper bound outperforms the explicit upper bound proved in \cite{wang2019bounds} when $r=2,3$.
			
			(2)An upper bound on the length of almost optimal LRCs with minimum distance $d=7$, locality $r=2$ and disjoint local repair groups at $q^2+q+3$. This upper bound is better than that proved in \cite{guruswami2019long} and \cite{chen2020improved} when $d=7$, $r=2$.
			
			(3)An algorithm to construct almost optimal LRCs with minimum distance $7$, locality $2$ and disjoint local repair groups, whose length is at least $3\lceil\frac{\sqrt{2}}{3}q\rceil$ when $q\geq 4$. It is worth noting that these LRCs attain the upper bound on dimension in Lemma \ref{lem3}.
			\section{Preliminaries}
			
			\subsection{Notations and definitions}
			We recall some notations and definitions.
			\begin{itemize}
				\item Let $\lfloor a \rfloor,\lceil a \rceil$ be the floor function and the ceiling function of $a$, respectively.
				\item Let $\mathbb{F}_q$ be the finite field with $q$ elements, where $q$ is a prime power. 
				\item Let $\mathbb{F}_q^n$ be the $n$-dimensional vector space over $\mathbb{F}_q$.
				\item For any positive integer $n$,
				we denote by $[n]$ the set $\{1,2,\dots,n\}$. 
				\item For any $\boldsymbol{c}=(c_1,c_2,\dots,c_n)\in \mathbb{F}_q^n$,
				the support set of $\boldsymbol{c}$ is defined as
				 $\Supp(\boldsymbol{c})=\{i\in [n]|c_i\neq 0\}$,
				 the hamming weight of $\boldsymbol{c}$ is defined as $\Wt(\boldsymbol{c})=|\Supp(\boldsymbol{c})|$.
			
			\item For any non-trivial vector space $V$ over $\mathbb{F}_q$, we denote the set of all one-dimensional subspaces of $V$ by $\left[V \atop 1 \right]$.
				
			\end{itemize}

			Now, we give the formal definition of locally repairable codes.
			\begin{definition}\label{defn1}
				A $q$-ary $[n,k,d]$ linear code $\mathcal{C}$ is said to be a locally repairable code with 
				locality $r$ if for any $i\in [n]$, there exists a set $R_i\subseteq [n]\backslash \{i\}$
				such that $|R_i|\leq r$ and for any codeword $(c_1,c_2,\dots,c_n)\in \mathcal{C}$, 
				$$                                                            
				c_i=\sum_{j\in R_i}{k_jc_j},
				$$
				where $k_j\in \mathbb{F}_q\backslash \{0\}$ are fixed elements. $R_i\cup \{i\}$ is called a local repair group. 
				We denote its parameters by $(n,k,d,r)_q$.
				\end{definition}
 If some local repair groups form a partition of $[n]$, 
and each of them has cardinality $(r+1)$, then we say that $\mathcal{C}$ has disjoint local
repair 
groups.

An LRC has disjoint local repair groups if and only if it has an equivalent parity-check matrix of the following form:
\setlength{\arraycolsep}{0.8pt}
\begin{equation*}
	\left[\begin{array}{cccc|cccc|c|cccc}
		1&1&\dots&1&0&0&\dots&0&\dots&0&0&\dots&0\\
		0&0&\dots&0&1&1&\dots&1&\dots&0&0&\dots&0\\
		\vdots&\vdots&\vdots&\vdots&\vdots&\vdots&\vdots&\vdots&\vdots&\vdots&\vdots&\vdots&\vdots\\
		0&0&\dots&0&0&0&\dots&0&\dots&1&1&\dots&1\\
		\hline
		\boldsymbol{u}_1^{(1)}&\dots&\boldsymbol{u}_r^{(1)}&\boldsymbol{0}&
		\boldsymbol{u}_1^{(2)}&\dots&\boldsymbol{u}_r^{(2)}&\boldsymbol{0}&\dots&\boldsymbol{u}_1^{(L)}&\dots&\boldsymbol{u}_r^{(L)}&\boldsymbol{0}\\
	\end{array}\right]
\end{equation*}
			\subsection{The LRCs with minimum distance $7$, locality $2$ and disjoint local repair groups}
			In this paper, we focus on LRCs with minimum distance $7$, locality $2$ and disjoint local repair groups. There exists no LRC attaining the Singleton-like bound \eqref{eq1} in this case.  
			\begin{lemma}[\cite{guruswami2019long}, Lemma 2.2]\label{lem1}
				Let $n,k,d,r$ be positive integers.
				
				If $(r+1)|n$ and 
				$n-k-\lceil\frac{k}{r}\rceil+2=d$, 
				then 
				\begin{equation}\label{eq2}
					n-k-\frac{n}{r+1}=d-2-\lfloor\frac{d-2}{r+1}\rfloor
				\end{equation}
				Also, equality \eqref{eq2} along with 
				$(r+1)|n$ 
				and $d-2\not\equiv r\mod(r+1)$ imply 
				$n-k-\lceil\frac{k}{r}\rceil+2=d.$
			\end{lemma}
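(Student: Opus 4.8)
The plan is to treat the statement as a pair of purely arithmetic implications about floor and ceiling functions; no coding-theoretic input is needed. Throughout I would write $N=\frac{n}{r+1}$, which is an integer by the divisibility hypothesis $(r+1)\mid n$, set $p=\lceil\frac{k}{r}\rceil$ and $D=d-2$, and use repeatedly the elementary identities $\lfloor x+m\rfloor=\lfloor x\rfloor+m$ for $m\in\mathbb{Z}$ and $\lfloor -x\rfloor=-\lceil x\rceil$.

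For the forward implication I would start from the hypothesis $D=n-k-p$ and substitute it, together with $n=(r+1)N$, into the right-hand side of \eqref{eq2}. After cancelling the common terms, the target equality reduces to the single ceiling identity
\begin{equation*}
	\left\lceil\frac{k+p}{r+1}\right\rceil=p .
\end{equation*}
This I would verify directly from the defining inequalities $r(p-1)<k\le rp$ of $p=\lceil\frac{k}{r}\rceil$: unfolding the ceiling as $(r+1)(p-1)<k+p\le(r+1)p$ and simplifying leaves exactly $rp-r-1<k\le rp$, which is implied by $r(p-1)<k\le rp$. Note that this direction needs only $(r+1)\mid n$ and not the congruence hypothesis.

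For the converse the decisive step is to expand $D$ modulo $r+1$. Writing $D=(r+1)s+t$ with $0\le t\le r$, the hypothesis $d-2\not\equiv r\pmod{r+1}$ forces $t\le r-1$, so $\lfloor\frac{D}{r+1}\rfloor=s$. Solving \eqref{eq2} for $k$ while using $n=(r+1)N$ gives $k=r(N-s)-t$. Because $0\le t\le r-1$, I then obtain $\lceil\frac{k}{r}\rceil=N-s$, and substituting back yields
\begin{equation*}
	n-k-\left\lceil\frac{k}{r}\right\rceil=(r+1)s+t=D=d-2 ,
\end{equation*}
which is the asserted equality $n-k-\lceil\frac{k}{r}\rceil+2=d$.

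The only genuinely delicate point, and the step I expect to be the main obstacle to state cleanly, is isolating exactly where the congruence hypothesis is indispensable: it is precisely the passage from $k=r(N-s)-t$ to $\lceil\frac{k}{r}\rceil=N-s$, which requires $\lfloor\frac{t}{r}\rfloor=0$ and hence $t\le r-1$. To confirm the hypothesis cannot be dropped I would check the excluded case $t=r$: there $k=r(N-s-1)$ and $\lceil\frac{k}{r}\rceil=N-s-1$, so the Singleton-type quantity comes out one too large and equality fails. Beyond this, the argument is bookkeeping with floors and ceilings, where the main care is to keep strict versus non-strict inequalities straight when unfolding the ceiling identity in the forward direction.
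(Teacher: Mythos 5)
The paper does not prove this lemma at all; it is quoted verbatim from \cite{guruswami2019long} (Lemma 2.2) and used as a black box, so there is no internal proof to compare against. Your self-contained arithmetic verification is correct on both directions: the forward implication does reduce, after substituting $D=n-k-p$ and $n=(r+1)N$, to the ceiling identity $\lceil\frac{k+p}{r+1}\rceil=p$, and your unfolding $(r+1)(p-1)<k+p\le(r+1)p \Leftrightarrow rp-r-1<k\le rp$ is implied by $r(p-1)<k\le rp$; the converse correctly locates the role of the congruence hypothesis in the step $\lceil\frac{r(N-s)-t}{r}\rceil=N-s$, which needs $0\le t\le r-1$. Your check of the excluded residue $t=r$ (where the quantity overshoots by exactly $1$) is also consistent with the paper's Lemma 2 and the surrounding discussion of almost optimal LRCs. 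One minor point worth stating explicitly if you write this up: for $d=1$ the decomposition $D=-1=(r+1)(-1)+r$ lands in the excluded residue class, so the converse hypothesis silently rules that degenerate case out, and no extra care is needed.
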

			
			\begin{lemma}[\cite{xing2018construction}, Remark 1]\label{lem2}
				Let $n,k,d,r$ be positive integers and $q$ be a prime power. If $d-2\equiv r\mod (r+1)$, 
				then there is 
				no $(n,k,d,r)_q$  LRC 
				satisfying $(r+1)|n$ and $n-k-\lceil\frac{k}{r}\rceil+2=d$.
			\end{lemma}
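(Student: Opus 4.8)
The plan is to argue by contradiction, deducing a purely arithmetic impossibility from Lemma \ref{lem1}. Suppose $\mathcal{C}$ is an $(n,k,d,r)_q$ LRC with $(r+1)\mid n$ and $n-k-\lceil k/r\rceil+2=d$, and suppose in addition that $d-2\equiv r\pmod{r+1}$. Since the hypotheses of Lemma \ref{lem1} are satisfied, equality \eqref{eq2} holds, i.e. $n-k-\frac{n}{r+1}=d-2-\lfloor\frac{d-2}{r+1}\rfloor$. The proof then reduces to extracting two incompatible values of $\lceil k/r\rceil$ from the optimality relation and from \eqref{eq2}.

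First I would write the residue condition explicitly as $d-2=(r+1)t+r$ with $t=\lfloor\frac{d-2}{r+1}\rfloor\ge 0$, so that the right-hand side of \eqref{eq2} collapses to $(r+1)t+r-t=r(t+1)$. Thus \eqref{eq2} becomes $n-k-\frac{n}{r+1}=r(t+1)$, which I solve for $k$ to obtain $k=\frac{rn}{r+1}-r(t+1)=r\big(\tfrac{n}{r+1}-t-1\big)$. This is the decisive step: because $(r+1)\mid n$, the quantity $\frac{n}{r+1}$ is an integer, so $k$ is a multiple of $r$. Hence the ceiling turns into an exact division, $\lceil k/r\rceil=\frac{k}{r}=\frac{n}{r+1}-(t+1)$.

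It then remains to substitute this value back into the optimality relation. Using $n-k=\frac{n}{r+1}+r(t+1)$ from \eqref{eq2}, I compute
$$n-k-\left\lceil\frac{k}{r}\right\rceil+2=\frac{n}{r+1}+r(t+1)-\Big(\frac{n}{r+1}-t-1\Big)+2=(r+1)(t+1)+2.$$
On the other hand $d=(r+1)t+r+2=(r+1)(t+1)+1$, so the displayed quantity equals $d+1$, contradicting the assumed equality $n-k-\lceil k/r\rceil+2=d$. This contradiction shows that no such LRC can exist.

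I do not expect a genuine obstacle here, since the argument is entirely arithmetic; the only point requiring care is the bookkeeping of the floor and ceiling functions, namely recognising that the residue hypothesis $d-2\equiv r\pmod{r+1}$ is exactly what forces $r\mid k$, which turns the ceiling into a division and makes the off-by-one gap surface. As a sanity check one can also bypass Lemma \ref{lem1} and argue directly by cases on $k\bmod r$ from $d-2=n-k-\lceil k/r\rceil$: if $r\mid k$ then $d-2\equiv 0\pmod{r+1}$, while if $r\nmid k$ one finds $d-2\equiv-(k\bmod r)-1\pmod{r+1}$; neither residue equals $r$, the first being $0$ and the second equalling $r$ only if $r+1\mid(k\bmod r)$, which is impossible for $0<k\bmod r<r$. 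This recovers the same conclusion and confirms the computation.
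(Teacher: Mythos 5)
The paper states this lemma as a cited result (\cite{xing2018construction}, Remark 1) and gives no proof of its own, so there is nothing internal to compare against. Your argument is correct and is the standard one: applying Lemma \ref{lem1} and the residue hypothesis $d-2=(r+1)t+r$ forces $k=r\big(\tfrac{n}{r+1}-t-1\big)$, hence $r\mid k$, and substituting $\lceil k/r\rceil=k/r$ back yields $n-k-\lceil k/r\rceil+2=(r+1)(t+1)+2=d+1$, contradicting optimality; your direct case check on $k\bmod r$ (which implicitly uses $(r+1)\mid n$) confirms the same conclusion.
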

		\begin{remark}
			Due to this
			phenomenon, an $(n,k,d,r)_q$ LRC satisfying $d-2\equiv r\mod (r+1)$, $(r+1)|n$, and $d=n-k-\lceil\frac{k}{r}\rceil+1$ is also called an optimal LRC in \cite{guruswami2019long} and \cite{xing2018construction}. However, in this paper, we call it an almost optimal LRC since its minimum distance differs one from the
			Singleton-like bound \eqref{eq1}.
		\end{remark}
		We review a recent result in \cite{cai2022bound}. It will be useful to determine the minimum distance of the LRCs in our construction.
			\begin{theorem}[\cite{cai2022bound}, Theorem 1]\label{thm1}
				Let C be an optimal $[n, k, d]_q$ linear code with all-symbol $(r,\delta)$-locality. Assume $n = m(r +\delta-1), k = ur, u\geq 2$.
				
				If $2|r$ and $m\geq u+1$ then
				$$q\geq \phi
				\Bigg(\bigg((\frac{k}{r}+1)\lfloor\frac{2r+2\delta-2}{r}\rfloor-1)
				\bigg)^{\frac{2}{r}}\Bigg),$$
				where $\phi(x)$ is the smallest prime power greater or equal to $x$.
				
				If $2\nmid r$ and $m\geq u + 2$, then
				$$q\geq \phi\bigg((\frac{k}{r}^{\frac{2}{r+1}})\bigg)$$
			\end{theorem}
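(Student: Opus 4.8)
The plan is to exploit the rigid structure that optimality forces on an $(r,\delta)$-locally repairable code and then convert the resulting ``general position'' requirement into a counting bound over $\mathbb{F}_q$. First I would record the structural fact that, for a code meeting the generalized Singleton bound with $n=m(r+\delta-1)$ and $k=ur$, the $n$ coordinates split into $m$ disjoint repair groups of size $r+\delta-1$, and the restriction of $\mathcal{C}$ to each group is an $[r+\delta-1,r,\delta]$ MDS code. Fixing an MDS local parity-check block for each group, I would write a parity-check matrix $H$ obtained by stacking a block-diagonal local part $H_{\mathrm{loc}}$ (with $m$ blocks of $\delta-1$ rows) over a global part $H_{\mathrm{glob}}$ carrying the remaining $n-k-m(\delta-1)$ parities. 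All subsequent analysis takes place on $H_{\mathrm{glob}}$ after quotienting out the local redundancy inside each group.

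Next I would translate optimality into an algebraic non-degeneracy statement. Since $d$ attains the generalized Singleton bound, every sufficiently small set of columns of $H$ is linearly independent; combined with the MDS property of each local block, this forces the following: if one selects a bounded number of groups and forms the matrix whose columns are the $H_{\mathrm{glob}}$-contributions of representatives from those groups (modulo local relations), that matrix must have full rank for \emph{every} admissible choice. I would package this as the requirement that the data attached to the $m$ groups sit in general position inside a space whose dimension is governed by $r$ and $\delta$, so that all relevant minors are nonzero. The hypotheses $m\ge u+1$ (for even $r$) and $m\ge u+2$ (for odd $r$) are precisely what guarantee that enough groups are present to assemble the square, forced-invertible minors used here.

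The heart of the argument is the counting step. Each group contributes a point (or a small subspace) of a projective space over $\mathbb{F}_q$, and the full-rank conditions amount to a pairwise, essentially quadratic, non-collision requirement among these $m$ points, of the kind appearing for arcs, caps, or $B_2$-type configurations. Bounding the maximal size of such a configuration then yields $u=k/r\lesssim q^{\lceil r/2\rceil}$; the parity of $r$ enters because the controlling bilinear form is of symplectic type when $r$ is even (giving the sharp exponent $r/2$, together with the factor $\lfloor\frac{2r+2\delta-2}{r}\rfloor$ coming from how many representatives per group can be exploited) and requires one extra dimension when $r$ is odd (giving the weaker exponent, hence the $2/(r+1)$ and the stronger hypothesis $m\ge u+2$). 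Solving the resulting inequality for $q$ and invoking the smallest-prime-power function $\phi$ produces the two displayed bounds.

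I expect the main obstacle to be the middle step: isolating the precise quadratic incidence condition from the web of full-rank requirements, and in particular pinning down both the multiplicative constant $\lfloor\frac{2r+2\delta-2}{r}\rfloor$ and the clean even/odd dichotomy. Getting the constant right requires carefully tracking how many column-representatives each local MDS block can supply before the general-position condition breaks, which is where the interplay between $\delta$ and the parity of $r$ is most delicate; the final rearrangement into the $\phi(\cdot)$ form is then routine.
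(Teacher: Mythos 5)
The first thing to say is that the paper does not prove this statement at all: Theorem \ref{thm1} is quoted, with attribution, from \cite{cai2022bound} and is used purely as a black box to derive Corollary \ref{cor1}. There is therefore no in-paper proof against which your proposal can be checked; any assessment has to be of the proposal on its own terms.

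Taken on its own terms, your outline starts in a reasonable place --- optimality does force each of the $m$ repair groups to carry an $[r+\delta-1,r,\delta]$ MDS local code, and the remaining global parities must put the groups' contributions in general position --- but the step you yourself call ``the heart of the argument'' is asserted rather than carried out, and that is precisely where the theorem lives. You never actually derive the claimed quadratic non-collision condition from the full-rank requirements, never prove a bound of the shape $k/r\lesssim q^{r/2}$ for the resulting configuration, and the appeal to a ``symplectic-type bilinear form'' to explain the even/odd dichotomy is an unsupported guess; likewise the constant $\lfloor\frac{2r+2\delta-2}{r}\rfloor$ is named but not extracted from any computation, and the roles of the hypotheses $m\ge u+1$ versus $m\ge u+2$ are only gestured at. As written this is a research plan, not a proof. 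To turn it into one you would need to make the counting explicit --- for instance, by associating to each repair group a subspace built from the columns of the parity-check matrix and bounding how many such subspaces can coexist with the required pairwise intersection pattern over $\mathbb{F}_q$; this is the same style of argument the present paper carries out concretely (for its special case $d\ge 7$) in the proof of Lemma \ref{lem3}, where the count of two-dimensional subspaces through a fixed line yields the bound \eqref{eq3}. For the general statement you should consult \cite{cai2022bound} directly rather than reconstruct it from parity heuristics.
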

			When $\delta=2$ in Theorem \ref{thm1}, $\mathcal{C}$ is an LRC with locality $r$ as defined in Definition \ref{defn1}. We then have the following corollary.
			\begin{corollary}\label{cor1}
				Let $\mathcal{C}$ be an $(n,k,d,r)_q$ LRC.
				If $r=2$, $n=3L\textgreater q+4$ and $k=2(L-2)$, then the minimum distance $d\leq 7$.
			\end{corollary}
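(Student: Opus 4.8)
The plan is to derive the corollary as a direct specialization of Theorem~\ref{thm1} together with the Singleton-like bound~\eqref{eq1}, arguing by contradiction. First I would record that, since $\mathcal{C}$ has locality $r$ in the sense of Definition~\ref{defn1}, it is a code with $(r,\delta)$-locality for $\delta=2$, so the generalized Singleton bound coincides with~\eqref{eq1}. Substituting the given parameters $n=3L$, $k=2(L-2)$ and $r=2$ into~\eqref{eq1} gives
$$
n-k-\Big\lceil\tfrac{k}{r}\Big\rceil+2 = 3L-(2L-4)-(L-2)+2 = 8,
$$
so that $d\le 8$ holds unconditionally. Hence the entire content of the corollary is to exclude the extremal value $d=8$.

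Next I would suppose, for contradiction, that $d=8$. Then $\mathcal{C}$ attains the generalized Singleton bound, i.e.\ it is an optimal code with $(r,\delta)=(2,2)$-locality, so Theorem~\ref{thm1} is applicable. Matching the parameters, $n=m(r+\delta-1)=3m$ forces $m=L$, while $k=ur=2u$ forces $u=L-2$. Since $r=2$ is even, the relevant hypothesis of Theorem~\ref{thm1} is $m\ge u+1$, which reads $L\ge L-1$ and therefore always holds; the remaining hypothesis $u\ge 2$ amounts to $L\ge 4$, which I would verify follows from $n=3L>q+4$ in the range of $q$ under consideration.

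Finally I would feed these values into the even-$r$ bound of Theorem~\ref{thm1}. With $\tfrac{k}{r}+1=L-1$, $\big\lfloor\tfrac{2r+2\delta-2}{r}\big\rfloor=\lfloor 6/2\rfloor=3$ and exponent $\tfrac{2}{r}=1$, the bound simplifies to $q\ge\phi\big((L-1)\cdot 3-1\big)=\phi(3L-4)\ge 3L-4$. This yields $n=3L\le q+4$, contradicting the standing assumption $n=3L>q+4$. Therefore $d=8$ is impossible, and combined with $d\le 8$ we conclude $d\le 7$.

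The argument carries no deep difficulty --- the corollary is essentially a numerical specialization --- so the only points demanding care are the bookkeeping of the hypotheses of Theorem~\ref{thm1} (in particular confirming $u\ge 2$ and reading off $m,u$ correctly) and the observation that forcing $d=8$ is precisely what makes $\mathcal{C}$ optimal and thus eligible for Theorem~\ref{thm1}.
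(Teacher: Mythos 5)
Your proof follows essentially the same route as the paper's: substitute the parameters into the Singleton-like bound \eqref{eq1} to get $d\le 8$, then rule out $d=8$ by observing that the code would be optimal and invoking Theorem~\ref{thm1} with $\delta=2$, $m=L$, $u=L-2$, which yields $q\ge\phi(3L-4)\ge 3L-4=n-4$ and contradicts $n>q+4$. You are in fact slightly more careful than the paper in tracking the hypotheses of Theorem~\ref{thm1} (the paper never checks $u\ge 2$; note that $n>q+4$ only forces $L\ge 4$ when $q\ge 8$, so that check is not quite as automatic as you suggest), but the argument is otherwise identical.
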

	\begin{proof}
	By the inequality \eqref{eq1},$$d\leq 3L-2(L-2)-\lceil\frac{2(L-2)}{2}\rceil+2=8.$$ If $d=8$, then $\mathcal{C}$ is an optimal LRC satisfying the condition of Theorem \ref{thm1}. By Theorem \ref{thm1}, we have $$q\geq \phi(3(L-2+1)-1)\geq 3L-4=n-4,$$ which is a contradiction. Hence $d\leq 7$.
\end{proof}
			Thus, if we have an almost optimal LRC with parameters $(n=3L,k=2L-4,d=7,r=2)_q$ and $n\textgreater q+4$, then it obtains the best possible minimum distance when its other parameters and the field size $q$ are unchanged.

			\section{Main results}
			\subsection{An upper bound on the dimension of LRCs with disjoint local repair groups and $d\geq 7$ }
			By analysing the parity-check matrix, we have the following upper bound on the dimension of LRCs with minimum distance $d\geq 7$ and disjoint local repair groups.
			\begin{lemma*}\label{lem3}	
				Let $\mathcal{C}$ be an LRC with disjoint local repair groups and parameters $$(n=L(r+1),k=Lr-u,d\geq 7,r)_q,$$ then we have
				
				\begin{equation}\label{eq3}
					k\leq \frac{rn}{r+1}-\lceil\log_q\big(q+(q-1)q(\frac{r}{2}n-r)\big)\rceil
				\end{equation}
			\end{lemma*}
			\begin{proof} 
				$\mathcal{C}$ has an equivalent parity-check matrix $P$ of the following form. 
				\begin{equation*}
				P=	\left[\begin{array}{cccc|cccc|c|cccc}
						1&1&\dots&1&0&0&\dots&0&\dots&0&0&\dots&0\\
						0&0&\dots&0&1&1&\dots&1&\dots&0&0&\dots&0\\
						\vdots&\vdots&\vdots&\vdots&\vdots&\vdots&\vdots&\vdots&\vdots&\vdots&\vdots&\vdots&\vdots\\
						0&0&\dots&0&0&0&\dots&0&\dots&1&1&\dots&1\\
						\hline
						\boldsymbol{u}_1^{(1)}&\dots&\boldsymbol{u}_r^{(1)}&\boldsymbol{0}&
						\boldsymbol{u}_1^{(2)}&\dots&\boldsymbol{u}_r^{(2)}&\boldsymbol{0}&\dots&\boldsymbol{u}_1^{(L)}&\dots&\boldsymbol{u}_r^{(L)}&\boldsymbol{0}\\
					\end{array}\right]
				\end{equation*}
				Let $\boldsymbol{h}_{i}^{(t)}$ be the $\big((t-1)(r+1)+i\big)$-th column of the above matrix, $\boldsymbol{u}_{r+1}^{(t)}=\boldsymbol{0}\in \mathbb{F}_q^{u}$ for $t\in [L]$, and  $\boldsymbol{u}_{i,j}^{(t)}\triangleq \boldsymbol{u}_{j}^{(t)}-\boldsymbol{u}_{i}^{(t)}$ for $i\neq j\in [r+1], t\in [L].$ We have the following claim.
				
				\textbf{Claim 1}:Let $S_1=\{(i,j,t)|i\textless j\in [r+1]\backslash \{1\},t=1\}$, $S_2=\{(i,j,t)|i\textless j\in [r+1],t\in [L]\backslash \{1\}\}$, then	$\{\Span\{\boldsymbol{u}_{1,2}^{(1)},\boldsymbol{u}_{i,j}^{(t)}\}|(i,j,t)\in S_1\cup S_2\}$
				is a set of distinct two-dimensional subspaces of $\mathbb{F}_q^u$ and these subspaces pairwisely intersect in the
				 one-dimensional subspace $\Span\{\boldsymbol{u}_{1,2}^{(1)}\}$ .  
				
				To prove Claim 1, we need to show that for any $(i_1,j_1,t_1) \neq (i_2,j_2,t_2)\in S_1\cup S_2$,
				$\boldsymbol{u}_{i_1,j_1}^{(t_1)},\boldsymbol{u}_{i_2,j_2}^{(t_2)},\boldsymbol{u}_{1,2}^{(1)}$ are linear independent.
				
				Assume that $k_1\boldsymbol{u}_{i_1,j_1}^{(t_1)}+k_2\boldsymbol{u}_{i_2,j_2}^{(t_2)}+k_3\boldsymbol{u}_{1,2}^{(1)}=\boldsymbol{0}$, where $(k_1,k_2,k_3)\in \mathbb{F}_q^3$, then we have
				\begin{footnotesize}
					$$k_1(\boldsymbol{h}_{j_1}^{(t_1)}-\boldsymbol{h}_{i_1}^{(t_1)})+k_2(\boldsymbol{h}_{j_2}^{(t_2)}-\boldsymbol{h}_{i_2}^{(t_2)})+k_3(\boldsymbol{h}_{2}^{(1)}-\boldsymbol{h}_{1}^{(1)})=\boldsymbol{0}.$$\end{footnotesize}
				If $k_3\neq 0$, then the coefficient of $\boldsymbol{h}_{1}^{(1)}$ in above equation is non-zero. Note that any six columns of $P$ are linear independent. This leads to a contradiction.
				
				If $k_3=0$, $k_1\neq 0$, then the coefficient of $\boldsymbol{h}_{j_1}^{(t_1)}$ or $\boldsymbol{h}_{i_1}^{(t_1)}$ in above equation is non-zero, which is a contradiction.
				
				If $k_3=0$, $k_2\neq 0$, then the coefficient of $\boldsymbol{h}_{j_2}^{(t_2)}$ or $\boldsymbol{h}_{i_2}^{(t_2)}$ in above equation is non-zero, which is a contradiction. 
				
				Thus, $k_1=k_2=k_3=0$, $\boldsymbol{u}_{i_1,j_1}^{(t_1)},\boldsymbol{u}_{i_2,j_2}^{(t_2)},\boldsymbol{u}_{1,2}^{(1)}$ are linear independent.
				
				By Claim 1, there exist $\big(L\frac{r(r+1)}{2}-r\big)$ distinct two-dimensional subspaces containing $\Span\{\boldsymbol{u}_{1,2}^{(1)}\}$ in $\mathbb{F}_q^u$. Note that there are totally $\frac{q^u-q}{q^2-q}$ two-dimensional subspaces containing $\Span\{\boldsymbol{u}_{1,2}^{(1)}\}$ in $\mathbb{F}_q^u$. We have		
				\begin{equation}\label{eq4}
					\frac{q^u-q}{q^2-q}\geq L\frac{r(r+1)}{2}-r,
				\end{equation}
				$$q^u\geq q+(q-1)q(\frac{r}{2}n-r),$$
				$$k\leq \frac{rn}{r+1}-\log_q\big(q+(q-1)q(\frac{r}{2}n-r)\big).$$
				
				Since $k$ and $\frac{rn}{r+1}$ are positive integers, we have
				\begin{equation*}
					k\leq \frac{rn}{r+1}-\lceil\log_q\big(q+(q-1)q(\frac{r}{2}n-r)\big)\rceil.
				\end{equation*}
			\end{proof}
			\begin{remark}\label{rem2}
				By Theorem 11 in \cite{wang2019bounds}, under the setting of Lemma \ref{lem3}, we have
				\begin{footnotesize}
				\begin{equation*}
					k\leq \frac{rn}{r+1}- \log_q\big(1+(q-1)\frac{rn}{2}+\frac{(r-1)r(q-1)(q-2)n}{6}\big).
				\end{equation*}
			\end{footnotesize}
				Our upper bound \eqref{eq3} is tighter than this upper bound when $r=2$ or $3$, $q\geq 2$. The LRC in Example \ref{exma1} attains the bound \eqref{eq3}, but does not attain the above bound.
			\end{remark}
		\subsection{An upper bound on the length of almost optimal LRCs with minimum distance $7$, locality $2$ and disjoint local repair groups}
		\begin{lemma}\label{lem4}	
			Let $\mathcal{C}$ be an almost optimal LRC with disjoint local repair groups and parameters  $$(n=3L,k=2L-4,d=7,r=2)_q,$$
		then
			\begin{equation}\label{eq5}
				n\leq 3+q(q+1).
			\end{equation}
		\end{lemma}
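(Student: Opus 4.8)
The plan is to show that this is essentially a specialization of Lemma \ref{lem3}. First I would observe that the parameters already pin down the number of rows in the lower block of the canonical parity-check matrix: writing $k=Lr-u$ as in Lemma \ref{lem3}, the equalities $r=2$, $n=3L$ and $k=2L-4$ force $u=Lr-k=2L-(2L-4)=4$. Thus the vectors $\boldsymbol{u}_i^{(t)}$ appearing in $P$ live in $\mathbb{F}_q^{4}$, and since $d=7\geq 7$ any six columns of $P$ are linearly independent, so the hypotheses of Lemma \ref{lem3} (and in particular its Claim 1) are met with $r=2$, $u=4$.

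Next I would invoke inequality \eqref{eq4} from the proof of Lemma \ref{lem3}, which records that the $\big(L\frac{r(r+1)}{2}-r\big)$ distinct two-dimensional subspaces through the common line $\Span\{\boldsymbol{u}_{1,2}^{(1)}\}$ cannot exceed the total number $\frac{q^u-q}{q^2-q}$ of two-dimensional subspaces of $\mathbb{F}_q^{u}$ containing that line. Substituting $r=2$ and $u=4$, the left side is $3L-2=n-2$ and the right side simplifies as
\begin{equation*}
	\frac{q^4-q}{q^2-q}=\frac{q(q^3-1)}{q(q-1)}=\frac{q^3-1}{q-1}=q^2+q+1.
\end{equation*}
Hence $n-2\leq q^2+q+1$, which rearranges to $n\leq q^2+q+3=3+q(q+1)$, as claimed.

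The argument is short precisely because the combinatorial heart — the construction of the $3L-2$ pairwise-distinct planes all meeting in one line, together with the proof that they are genuinely distinct (equivalently, that the relevant triples of difference vectors are independent) — was already carried out in Claim 1 of Lemma \ref{lem3} using the six-column independence. So I do not expect a real obstacle; the only points needing care are confirming that $u=4$ is \emph{forced} (rather than merely bounded) by the dimension count $\operatorname{rank}(P)=n-k=L+4$ against the $L$ group-indicator rows, and performing the clean cancellation $\frac{q^4-q}{q^2-q}=q^2+q+1$. If a self-contained proof were preferred, I would instead re-run the Claim 1 argument directly in $\mathbb{F}_q^{4}$: fix the line $\Span\{\boldsymbol{u}_{1,2}^{(1)}\}$, exhibit the $3L-2$ distinct planes through it indexed by the within-group pairs, and count the planes through a line in a four-dimensional space, reaching the same inequality.
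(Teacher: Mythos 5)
Your proposal is correct and follows essentially the same route as the paper: the paper's proof likewise just invokes inequality \eqref{eq4} from the proof of Lemma \ref{lem3} with $u=Lr-k=4$ and $r=2$, yielding $q^2+q+1\geq 3L-2=n-2$ and hence $n\leq q^2+q+3$. The only difference is that you spell out the verification that $u=4$ is forced and the simplification $\frac{q^4-q}{q^2-q}=q^2+q+1$, which the paper leaves implicit.
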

		\begin{proof}
			By the inequality \eqref{eq4} in the proof of Lemma \ref{lem3}, $$\frac{q^u-q}{q^2-q}\geq L\frac{r(r+1)}{2}-r,$$ where $u\triangleq Lr-k=4$.
			
		   Thus	$n\leq q^2+q+3.$
		\end{proof}

		\begin{remark}\label{rem3}
			(1) Let $\mathcal{C}$ be an LRC with disjoint local repair groups. If its parameters $(n,k,d,r)_q$ satisfy the equation $\eqref{eq2}$ and $d\geq 5$, 
			
			then by  Theorem 3.2 in \cite{guruswami2019long}, we have
			\begin{equation*}
				n\leq \begin{cases}
					\frac{r+1}{r}\times \frac{d-a}{4(q-1)}\times q^{\frac{4(d-2)}{d-a}}&d\equiv 1,2 \mod 4\\
					\frac{r+1}{r}\times (\frac{d-a}{4(q-1)}\times q^{\frac{4(d-3)}{d-a}}+1)&d\equiv 3,4 \mod 4\\
				\end{cases},
			\end{equation*}
		where $a\triangleq d-4(\lceil\frac{d}{4}\rceil-1)$.
		
			by Lemma 3 in \cite{chen2020improved}, we have
			$$n\leq \frac{2}{r}\times \frac{q^{d-2-\lfloor\frac{d-2}{r+1}\rfloor}-1}{q-1}.$$

			When $d=7,r=2$, the above two upper bounds become $n\leq  \frac{3q^4}{2(q-1)}$ and $n\leq \frac{(q^4-1)}{(q-1)}$, respectively.
			The upper bound \eqref{eq5} is better than the above two upper bounds in this case.
%
		\end{remark}
			\subsection{Almost optimal LRCs with minimum distance $7$ and locality $2$ based on $t$-spread}
			In this subsection, we will construct almost optimal LRCs with minimum distance $7$, locality $2$ and disjoint local repair groups.
			The following lemma is useful.
			\setlength{\arraycolsep}{1.2pt}
			
			\begin{lemma}\label{lem5}
				Let $q$ be a prime power. $L\geq 3$ is an integer. There is an LRC with disjoint local repair groups and parameters 
				$[n=3L,k=2L-4,7\leq d\leq 
				8,2]_q$
				if and only if there is a vector sequence
				$\boldsymbol{u}_1^{(i)},\boldsymbol{u}_2^{(i)}\in
				\mathbb{F}_q^4, i\in[L]$
				satisfying the following conditions
				\begin{enumerate}[\textbf{C}.1]
					\item 
					$\dim(\Span\{\boldsymbol{u}_1^{(i)},\boldsymbol{u}_2^{(i)}\})=2$ for any $i\in [L]$.
					\item $\Span\{\boldsymbol{u}_1^{(i)},\boldsymbol{u}_2^{(i)}\}\cap 
					\Span\{\boldsymbol{u}_1^{(j)},\boldsymbol{u}_2^{(j)}\}=\{\boldsymbol{0}\}$ for any $i\textless j\in [L]$.
					\item 
					$\dim(\Span\{\boldsymbol{u}_a^{(i)},\boldsymbol{u}_b^{(j)},\boldsymbol{u}_c^{(t)}\})=3$  for any $i\textless j\textless t\in[L]$, 
					$a,b,c\in \{0,1,2\}$.
				\end{enumerate}
				where $\boldsymbol{u}_0^{(i)}\triangleq \boldsymbol{u}_1^{(i)}-\boldsymbol{u}_2^{(i)}.$
			\end{lemma}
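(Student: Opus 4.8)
The plan is to pass to the equivalent parity-check matrix $P$ of the standard disjoint-local-repair-group form displayed above, now with $r=2$ and $u = Lr-k = 4$, and to read the vectors $\boldsymbol{u}_1^{(t)},\boldsymbol{u}_2^{(t)}\in\mathbb{F}_q^4$ off its $L$ column-groups. Writing $\boldsymbol{e}_t$ for the $t$-th standard basis vector of $\mathbb{F}_q^L$, setting $\boldsymbol{u}_3^{(t)}\triangleq\boldsymbol{0}$, and letting $A$ denote the top $L$ indicator rows and $B$ the bottom $4\times n$ block, the three columns of group $t$ are $\boldsymbol{h}_i^{(t)}=(\boldsymbol{e}_t,\boldsymbol{u}_i^{(t)})$ for $i\in\{1,2,3\}$. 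Since the minimum distance equals the least number of linearly dependent columns of $P$, the condition $d\ge 7$ is exactly ``every six columns are linearly independent,'' while $d\le 8$ is automatic, being forced by the Singleton-like bound \eqref{eq1} for these parameters. Thus I only have to show that \textbf{C}.1--\textbf{C}.3 are equivalent to the conjunction of ``$P$ has full row rank $L+4$'' (so that $k=2L-4$) and ``every six columns of $P$ are independent.''

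The engine of both directions is a dictionary between dependencies of columns and the geometry of the planes $V_t\triangleq\Span\{\boldsymbol{u}_1^{(t)},\boldsymbol{u}_2^{(t)}\}$. In any dependence $\sum c_{i,t}\boldsymbol{h}_i^{(t)}=\boldsymbol{0}$ the indicator rows force $\sum_i c_{i,t}=0$ inside each group, so a touched group contributes at least two columns; hence a dependence on at most six columns touches one, two, or three groups. A zero-sum combination of two columns of group $t$ contributes to the bottom block a scalar multiple of one of $\boldsymbol{u}_0^{(t)},\boldsymbol{u}_1^{(t)},\boldsymbol{u}_2^{(t)}$ (the pairwise differences, using $\boldsymbol{u}_3^{(t)}=\boldsymbol{0}$), whereas a zero-sum combination of all three columns contributes an arbitrary vector of $V_t$. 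Consequently a six-or-fewer-column dependence exists precisely when one of the following occurs: $\boldsymbol{u}_1^{(t)},\boldsymbol{u}_2^{(t)}$ are dependent for some $t$ (one group, negating \textbf{C}.1); $V_i\cap V_j\neq\{\boldsymbol{0}\}$ for some $i<j$ (two groups, negating \textbf{C}.2); or $\boldsymbol{u}_a^{(i)},\boldsymbol{u}_b^{(j)},\boldsymbol{u}_c^{(t)}$ are dependent for some $i<j<t$ and $a,b,c\in\{0,1,2\}$ (three groups, negating \textbf{C}.3).

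For ``vectors $\Rightarrow$ code'' I assume \textbf{C}.1--\textbf{C}.3 and run the dictionary in reverse: none of the three dependence patterns can occur, so every six columns are independent and $d\ge 7$. For the rank, the zero column $\boldsymbol{h}_3^{(t)}=(\boldsymbol{e}_t,\boldsymbol{0})$ of each group shows that any row relation $\boldsymbol{\mu}^\top A+\boldsymbol{\lambda}^\top B=\boldsymbol{0}$ forces $\mu_t=0$ for all $t$; and $B$ has rank $4$, since otherwise all $\boldsymbol{u}_i^{(t)}$ would lie in a common hyperplane, whence any two of the planes $V_i,V_j$ (there are at least two, as $L\ge 3$) would meet nontrivially by a dimension count $\dim(V_i\cap V_j)\ge 2+2-3=1$, contradicting \textbf{C}.2. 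This yields $\operatorname{rank}P=L+4$ and $k=2L-4$. The converse reverses the same dictionary: from $d\ge 7$ no six-column dependence exists, and using that every vector of $V_t$ is realized by a three-column zero-sum combination while each $\boldsymbol{u}_a^{(t)}$ is realized by a two-column difference, the failure of any of \textbf{C}.1--\textbf{C}.3 would manufacture an explicit dependence on at most six columns, a contradiction.

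I expect the main obstacle to be the bookkeeping inside the dictionary: verifying that each touched group must supply at least two columns, that two-column combinations realize exactly the difference directions $\boldsymbol{u}_0^{(t)},\boldsymbol{u}_1^{(t)},\boldsymbol{u}_2^{(t)}$ while three-column combinations realize the full plane $V_t$, and---most delicately---that when reducing the failure of \textbf{C}.3 to a genuine three-group dependence one must guarantee that all three scalar coefficients are nonzero, since a vanishing coefficient collapses the relation to a two- or one-group pattern already excluded by \textbf{C}.1--\textbf{C}.2. Keeping these coefficient supports straight, rather than any single algebraic manipulation, is where the argument demands care.
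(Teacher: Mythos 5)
Your proposal is correct and follows essentially the same route as the paper: both translate the condition ``every six columns of the structured parity-check matrix are linearly independent'' into \textbf{C}.1--\textbf{C}.3 by exploiting the fact that the indicator rows force the coefficients within each column-group to sum to zero, so that group contributions are differences $\boldsymbol{u}_a^{(t)}$ or arbitrary elements of $\Span\{\boldsymbol{u}_1^{(t)},\boldsymbol{u}_2^{(t)}\}$. The only cosmetic differences are that you classify dependencies by the number of touched groups (which actually covers the mixed support patterns slightly more explicitly than the paper's enumeration of the $3{+}3$ and $2{+}2{+}2$ cases) and that you pin down $k=2L-4$ by computing $\operatorname{rank}P=L+4$ directly, where the paper instead invokes the Singleton-like bound.
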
 
			\begin{proof}
				Let $H$ be an $(L+4)\times 3L$ matrix over $\mathbb{F}_q$ of the following form.
				\setcounter{MaxMatrixCols}{30}
				\setlength{\arraycolsep}{0.6pt}
				\begin{equation*}
					H=\left[\begin{array}{ccc|ccc|c|ccc}
						1&1&1&0&0&0&\cdots&0&0&0\\
						0&0&0&1&1&1&\cdots&0&0&0\\
						\vdots&\vdots&\vdots&\vdots&\vdots&\vdots&\vdots&\vdots&\vdots&\vdots\\
						0&0&0&0&0&0&\cdots&1&1&1\\
						\hline
						\boldsymbol{u}_1^{(1)}&\boldsymbol{u}_2^{(1)}&\boldsymbol{0}&
						\boldsymbol{u}_1^{(2)}&\boldsymbol{u}_2^{(2)}&\boldsymbol{0}&\dots&\boldsymbol{u}_1^{(L)}&\boldsymbol{u}_2^{(L)}&\boldsymbol{0}\\
					\end{array}\right]
				\end{equation*}
				Let $\boldsymbol{h}_a^{(i)}(i\in [L],a\in [3])$ be the $\big(3(i-1)+a\big)$-th column of $H$.
				We divide the proof of Lemma \ref{lem5} into
				two parts.
				
				\emph{Necessity}:
				If there is an LRC with disjoint local repair groups and parameters
				$(3L,2L-4,7\leq d\leq 8,r=2)_q$, then it has
				an equivalent parity-check matrix $H$ and any six columns of $H$ are linear independent.
				
				Then we have
				\begin{enumerate}[(1)]
					\item $\boldsymbol{h}_1^{(i)},\boldsymbol{h}_2^{(i)},\boldsymbol{h}_3^{(i)},\boldsymbol{h}_1^{(j)},\boldsymbol{h}_2^{(j)},\boldsymbol{h}_3^{(j)}$ are linear independent for any $i\textless j\in [L]$
					
					\item $\boldsymbol{h}_a^{(i)},\boldsymbol{h}_b^{(i)},\boldsymbol{h}_c^{(j)},\boldsymbol{h}_d^{(j)},\boldsymbol{h}_e^{(t)},\boldsymbol{h}_f^{(t)}$ are linear independent for any $i\textless j\textless t\in [L]$ and
					$a\neq b,c\neq 
					d,e\neq f\in [3]$.
				\end{enumerate}	
				
				We identify the above vector groups with matrices, then by elementary transformations, we have
				\begin{enumerate}[(1)]
					
					\item 
					$\boldsymbol{h}_1^{(i)}-\boldsymbol{h}_3^{(i)},\boldsymbol{h}_2^{(i)}-\boldsymbol{h}_3^{(i)},\boldsymbol{h}_3^{(i)},\boldsymbol{h}_1^{(j)}-\boldsymbol{h}_3^{(j)},\boldsymbol{h}_2^{(j)}-\boldsymbol{h}_3^{(j)},\boldsymbol{h}_3^{(j)}$
					are linear independent for any $i\textless j\in [L]$.
					
					\item 
					$\boldsymbol{h}_a^{(i)}-\boldsymbol{h}_b^{(i)},\boldsymbol{h}_b^{(i)},\boldsymbol{h}_c^{(j)}-\boldsymbol{h}_d^{(j)},\boldsymbol{h}_d^{(j)},\boldsymbol{h}_e^{(t)}-\boldsymbol{h}_f^{(t)},\boldsymbol{h}_f^{(t)}$
					are linear independent for any $i\textless j\textless t\in [L]$ and
					$a\neq b,c\neq 
					d,e\neq f\in [3]$.
				\end{enumerate}	
				Note that the first $L$ components of $\boldsymbol{h}_a^{(i)}-\boldsymbol{h}_b^{(i)}(a\neq b\in [3],i\in [L])$ are all zeros. We have
				\begin{enumerate}[(1)]
					\item 
					$\boldsymbol{u}_1^{(i)},\boldsymbol{u}_2^{(i)},\boldsymbol{u}_1^{(j)},\boldsymbol{u}_2^{(j)}$
					are linear independent for any $i\textless j\in [L]$.
					
					\item 
					$\boldsymbol{u}_a^{(i)},\boldsymbol{u}_b^{(j)},\boldsymbol{u}_c^{(t)}$
					are linear independent for any $i\textless j\textless t\in [L]$ and
					$a,b,c\in \{0,1,2\}$.
				\end{enumerate}	
				Thus the vector sequence $\boldsymbol{u}_1^{(i)},\boldsymbol{u}_2^{(i)},i=1,2,\dots,L$ must satisfy the conditions \textbf{C}.1,\textbf{C}.2 and \textbf{C}.3. The necessity is proved.
				
				\emph{sufficiency}:
				Assume that the vector sequence $\boldsymbol{u}_1^{(i)},\boldsymbol{u}_2^{(i)}\in
				\mathbb{F}_q^4, i\in[L]$ satisfies the conditions  \textbf{C}.1,\textbf{C}.2,\textbf{C}.3. Then we have
				\begin{enumerate}[(1)]
					\item 
					$\boldsymbol{u}_1^{(i)},\boldsymbol{u}_2^{(i)},\boldsymbol{u}_1^{(j)},\boldsymbol{u}_2^{(j)}$
					are linear independent for any $i\textless j \in [L]$. 
					
					\item 
					$\boldsymbol{u}_a^{(i)},\boldsymbol{u}_b^{(j)},\boldsymbol{u}_c^{(t)}$
					are linear independent for any $i\textless j\textless t\in [L]$ and
					$a,b,c\in \{0,1,2\}$.
				\end{enumerate}	
			We put the above vectors into their  corresponding positions of $H$. Note that the first $L$ components of $\boldsymbol{h}_a^{(i)}-\boldsymbol{h}_b^{(i)}(a\neq b\in [3],i\in [L])$ are all zeros. 
			   Then	we have
				\begin{enumerate}[(1)]
					
					\item 
					$\boldsymbol{h}_1^{(i)}-\boldsymbol{h}_3^{(i)},\boldsymbol{h}_2^{(i)}-\boldsymbol{h}_3^{(i)},\boldsymbol{h}_3^{(i)},\boldsymbol{h}_1^{(j)}-\boldsymbol{h}_3^{(j)},\boldsymbol{h}_2^{(j)}-\boldsymbol{h}_3^{(j)},\boldsymbol{h}_3^{(j)}$
					are linear independent for any $i\textless j\in [L]$.
					
					\item 
					$\boldsymbol{h}_a^{(i)}-\boldsymbol{h}_b^{(i)},\boldsymbol{h}_b^{(i)},\boldsymbol{h}_c^{(j)}-\boldsymbol{h}_d^{(j)},\boldsymbol{h}_d^{(j)},\boldsymbol{h}_e^{(t)}-\boldsymbol{h}_f^{(t)},\boldsymbol{h}_f^{(t)}$
					are linear independent for any $i\textless j\textless t\in [L]$ and
					$a\neq b,c\neq 
					d,e\neq f\in [3]$.

				\end{enumerate}	
			    We identify the above vector groups with matrices, by elementary transformations, we have
				\begin{enumerate}[(1)]
					
					\item $\boldsymbol{h}_1^{(i)},\boldsymbol{h}_2^{(i)},\boldsymbol{h}_3^{(i)},\boldsymbol{h}_1^{(j)},\boldsymbol{h}_2^{(j)},\boldsymbol{h}_3^{(j)}$ are linear independent for any $i\textless j\in [L]$. 
					
					\item $\boldsymbol{h}_a^{(i)},\boldsymbol{h}_b^{(i)},\boldsymbol{h}_c^{(j)},\boldsymbol{h}_d^{(j)},\boldsymbol{h}_e^{(t)},\boldsymbol{h}_f^{(t)}$ are linear independent for any $i\textless j\textless t\in [L]$ and
					$a\neq b,c\neq 
					d,e\neq f\in [3]$.
				\end{enumerate}	
				
				Thus, any six columns of $H$ are linear independent, the minimum distance of the LRC with $H$ as a parity-check matrix is at least $7$, the dimension of this LRC is at least $2L-4$. If the dimension is greater than $2L-4$, then there will be a contradiction by the Singleton-like bound \eqref{eq1}. So, its dimension must be $2L-4$. Again by the Singleton-like bound, its minimum distance is at most $8$. The sufficiency is proved.
			\end{proof}
			With Lemma \ref{lem5}, we could construct LRCs with minimum
			distance $7$, locality $2$ and disjoint local repair groups, whose length is at least $3\lceil\frac{\sqrt{2}}{3}q\rceil$. The construction is based on $t$-spread of vector spaces. We briefly review its definition and some related results.
			\begin{definition}[t-spread] 
				Given positive integers $t\leq m$ and $m$-dimensional vector space 
				$V_m(q)$ over 
				$\mathbb{F}_q$. The set of all $t$-dimensional subspaces of $V_m(q)$ is 
				denoted by $\mathcal{G}_q(m,t)$.
				
				If $ \mathcal{S}=\{W_1,W_2,\dots,W_L\}\subseteq \mathcal{G}_q(m,t)$ satisfies $W_a\cap W_b=\{\boldsymbol{0}\}$
				for any $a\neq 
				b\in 
				[L]$, then we call $\mathcal{S}$ a partial $t$-spread of $V_m(q)$ with size $L$.
				
				If also $\cup_{i=1}^{L}{W_i}
				=V_m(q)$, then we call $S$ a $t$-spread of $V_m(q)$ with size $L$.
			\end{definition}
			\begin{lemma}[\cite{bu1980partitions}, Lemma  2]\label{lem7}
				If positive integers $t|m$, then there is a $t$-spread of 
				$\mathbb{F}_q^{m}$ with size $\frac{q^m-1}{q^t-1}$.
			\end{lemma}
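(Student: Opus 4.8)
The plan is to construct the spread explicitly from the tower of fields $\mathbb{F}_q \subseteq \mathbb{F}_{q^t} \subseteq \mathbb{F}_{q^m}$, which is available precisely because $t \mid m$. First I would identify the ambient space $\mathbb{F}_q^m$ with the field $\mathbb{F}_{q^m}$ regarded as an $m$-dimensional vector space over $\mathbb{F}_q$; this identification is harmless, since any $\mathbb{F}_q$-linear isomorphism carries a $t$-spread to a $t$-spread of equal size.

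Because $t \mid m$, the subfield $\mathbb{F}_{q^t}$ is contained in $\mathbb{F}_{q^m}$, and $\mathbb{F}_{q^m}$ is a vector space of dimension $m/t$ over $\mathbb{F}_{q^t}$. The candidate spread I would take is the family of all one-dimensional $\mathbb{F}_{q^t}$-subspaces $W_\alpha = \mathbb{F}_{q^t}\cdot \alpha = \{c\alpha : c \in \mathbb{F}_{q^t}\}$, $\alpha \in \mathbb{F}_{q^m}^*$. The crucial observation is that each $W_\alpha$, while one-dimensional over $\mathbb{F}_{q^t}$, is a $t$-dimensional subspace over $\mathbb{F}_q$: multiplication by $\alpha$ is an $\mathbb{F}_q$-linear bijection of $\mathbb{F}_{q^t}$ onto $W_\alpha$, and $\dim_{\mathbb{F}_q}\mathbb{F}_{q^t} = t$.

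I would then check the two defining properties of a $t$-spread. Trivial intersection follows from the fact that distinct one-dimensional $\mathbb{F}_{q^t}$-subspaces meet only in $\boldsymbol{0}$, since any common nonzero vector would generate both over $\mathbb{F}_{q^t}$ and force them to coincide; covering follows because each nonzero $v$ lies in $W_v$, so the union of all the $W_\alpha$ is the whole space. For the size, note that $W_\alpha = W_\beta$ exactly when $\beta \in \mathbb{F}_{q^t}^*\alpha$, so the number of distinct subspaces equals the number of cosets of $\mathbb{F}_{q^t}^*$ in $\mathbb{F}_{q^m}^*$, namely $\frac{q^m-1}{q^t-1}$, as claimed. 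As a sanity check, these subspaces contribute $\frac{q^m-1}{q^t-1}\cdot(q^t-1) = q^m-1$ nonzero vectors with no overlap, exactly the count of nonzero elements of $\mathbb{F}_{q^m}$.

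I do not expect a genuine obstacle here: the only delicate point is keeping the two scalar fields distinct in the dimension bookkeeping, which is why I would write $\dim_{\mathbb{F}_q}$ and $\dim_{\mathbb{F}_{q^t}}$ explicitly rather than an unqualified $\dim$. A field-free alternative would partition $\mathbb{F}_q^m \setminus \{\boldsymbol{0}\}$ by the relation $v \sim w \iff w \in \mathbb{F}_{q^t}v$, but introducing the field tower is what makes both the $t$-dimensionality of each block and the count $\frac{q^m-1}{q^t-1}$ fall out immediately.
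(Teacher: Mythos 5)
Your construction is correct and complete: the family of one-dimensional $\mathbb{F}_{q^t}$-subspaces of $\mathbb{F}_{q^m}$ is exactly the classical subfield (Desarguesian) spread, and your checks of $t$-dimensionality over $\mathbb{F}_q$, trivial pairwise intersection, covering, and the count $\frac{q^m-1}{q^t-1}$ are all sound. The paper itself gives no proof of this lemma, citing it from the reference, and your argument is the standard one used there, so there is nothing to flag.
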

%
		The following Corollary is an immediate consequence.
			\begin{corollary}\label{cor2}
				There exists a $2$-spread of 
				$\mathbb{F}_q^{4}$ with size $q^2+1$. 
			\end{corollary}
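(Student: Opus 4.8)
The plan is to specialize Lemma~\ref{lem7} to the case $t=2$ and $m=4$, which is exactly the setting of the corollary. First I would check the divisibility hypothesis of the lemma: since $2 \mid 4$, the lemma applies directly to the $4$-dimensional space $\mathbb{F}_q^4$ over any prime power $q$. This immediately produces a $2$-spread of $\mathbb{F}_q^4$, so the existence half of the claim requires no further work.

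The only remaining step is to verify that the size formula $\frac{q^m-1}{q^t-1}$ given by Lemma~\ref{lem7} collapses to the stated value $q^2+1$. Substituting $t=2$, $m=4$ gives size $\frac{q^4-1}{q^2-1}$, and I would carry out the elementary factorization $q^4-1=(q^2-1)(q^2+1)$, so that
\[
\frac{q^4-1}{q^2-1}=q^2+1.
\]
Since $q\geq 2$ forces $q^2-1\neq 0$, this cancellation is valid for every prime power $q$, and no additional restriction on $q$ is needed beyond what Lemma~\ref{lem7} already assumes.

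Honestly, there is no real obstacle here: the corollary is a direct instantiation of Lemma~\ref{lem7} followed by a one-line arithmetic simplification, which is why it is described in the text as an immediate consequence. The only point worth stating explicitly is that the divisibility condition $t\mid m$ is satisfied in this case, so the full-spread (rather than merely partial-spread) conclusion is available.
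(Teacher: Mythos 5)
Your proposal is correct and matches the paper exactly: the paper derives Corollary~\ref{cor2} as an immediate consequence of Lemma~\ref{lem7} with $t=2$, $m=4$, using the same divisibility check and the same simplification $\frac{q^4-1}{q^2-1}=q^2+1$. Nothing further is needed.
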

			 With this $2$-spread, we have an algorithm to find 
			$\boldsymbol{u}_1^{(i)},\boldsymbol{u}_2^{(i)},i=1,2\dots,L$ 
			satisfying the condition of Lemma \ref{lem5}.
		
				\begin{algorithm}\label{alg1} 
					\hspace*{\fill}
					
					\textbf{Input} Finite Field $\mathbb{F}_q,q\geq 4$ 
					\begin{enumerate}[\textbf{Step} 1]
						\item (Initialization)
						
						Let $i=0$.
						
					By Corollary \ref{cor2}, $\mathbb{F}_q^4$ has a $2$-spread $\mathcal{P}=\{P_1,P_2,\dots,P_{q^2+1}\}$. 
					
					Let set family $\mathcal{M}=\{A_1,A_2,\dots,A_{q^2+1}\}$, where $A_t=\left[P_t\atop 1\right]$ for any $t\in [q^2+1]$.
					
					Let $\mathcal{M}_0=\mathcal{M}$.
						\item (Choose)
						
						Let $i=i+1$.
						
						Arbitrarily pick a set $A$ from set family $\mathcal{M}$,  
						
						arbitrarily pick three elements from $A$, 
						
						find their bases 
						$\boldsymbol{u}_0^{(i)},\boldsymbol{u}_1^{(i)},\boldsymbol{u}_2^{(i)}$ 
						such 
						that $$\boldsymbol{u}_0^{(i)}=\boldsymbol{u}_1^{(i)}-\boldsymbol{u}_2^{(i)}.$$
						Let $\mathcal{M}=\mathcal{M}\backslash \{A\}$.
						\item (Trim)
						
						If $i\geq 2$ and $\mathcal{M}\neq \varnothing$,
					  
					   \quad	For $B\in \mathcal{M}$, 
						
						\qquad  For $j\in [i-1]$,
						
						\qquad\quad	Let $B=B\backslash 
						\bigcup\limits_{a,b\in \{0,1,2\}} \left[\Span\{\boldsymbol{u}_a^{(i)},\boldsymbol{u}_b^{(j)}\}\atop 1\right].$
						
%
%

						\qquad end for
						
							\quad\quad	If 
						$|B|\textless 3$,
						
						\quad\qquad	Let $\mathcal{M}=\mathcal{M}\backslash \{B\}$.
						
						\quad\quad end if
					 
					 \quad  end for
					 
						end if

					   If $\mathcal{M}\neq \varnothing$, 
						
				     	\quad	go back to Step 2.
						
						 end if
						
					  If $\mathcal{M}=\varnothing$, 
						
						\quad let $L=i$,
						
						\quad	 \textbf{Output} 
						 $\{\boldsymbol{u}_1^{(1)},\boldsymbol{u}_2^{(1)},\dots,\boldsymbol{u}_1^{(L)},\boldsymbol{u}_2^{(L)}\}$.
						 
						end if

					\end{enumerate}
				\end{algorithm}
			Since $|\mathcal{M}_0|\leq q^2+1$, Algorithm \ref{alg1} stops when $i=i_0$ for some $i_0 \in [q^2+1]\backslash\{1\}$.
			\begin{prop}\label{prop1}
				The output of Algorithm \ref{alg1} is a set of vectors
				$\{\boldsymbol{u}_1^{(1)},\boldsymbol{u}_2^{(1)},\boldsymbol{u}_1^{(2)},\boldsymbol{u}_2^{(2)},
				\dots,\boldsymbol{u}_1^{(L)},\boldsymbol{u}_2^{(L)}\}$. If $L\geq 3$, then it satisfies the condition of Lemma \ref{lem5}.
			\end{prop}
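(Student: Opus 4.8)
The plan is to first dispatch the easy assertion—that the output has the stated form, which is immediate since each pass through Step 2 produces exactly the pair $\boldsymbol{u}_1^{(i)},\boldsymbol{u}_2^{(i)}$ (together with $\boldsymbol{u}_0^{(i)}=\boldsymbol{u}_1^{(i)}-\boldsymbol{u}_2^{(i)}$)—and then to verify conditions \textbf{C}.1, \textbf{C}.2, \textbf{C}.3 of Lemma \ref{lem5} one at a time. The key bookkeeping device will be to record, for each iteration $i\in[L]$, the spread element $P^{(i)}\in\mathcal{P}$ from which the set $A$ chosen in Step 2 of that iteration descends. Since every chosen $A$ is a subset of some $\left[P_t\atop 1\right]$ and is deleted from $\mathcal{M}$ once used, the elements $P^{(1)},\dots,P^{(L)}$ are pairwise distinct, and the three subspaces $\langle\boldsymbol{u}_0^{(i)}\rangle,\langle\boldsymbol{u}_1^{(i)}\rangle,\langle\boldsymbol{u}_2^{(i)}\rangle$ picked in that iteration are distinct one-dimensional subspaces of the two-dimensional space $P^{(i)}$. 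In passing I would note that Step 2 is always executable, i.e. the chosen $A$ has at least three elements ($|A_t|=q+1\geq 3$ initially, and the Trim step deletes any set whose size drops below $3$), and that the relation $\boldsymbol{u}_0^{(i)}=\boldsymbol{u}_1^{(i)}-\boldsymbol{u}_2^{(i)}$ can indeed be realized because any three distinct lines of a plane admit generators in that relation.

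Conditions \textbf{C}.1 and \textbf{C}.2 then follow immediately from the spread structure. For \textbf{C}.1, the distinct lines $\langle\boldsymbol{u}_1^{(i)}\rangle$ and $\langle\boldsymbol{u}_2^{(i)}\rangle$ span the two-dimensional space $P^{(i)}$, so $\Span\{\boldsymbol{u}_1^{(i)},\boldsymbol{u}_2^{(i)}\}=P^{(i)}$ has dimension $2$. For \textbf{C}.2, this gives $\Span\{\boldsymbol{u}_1^{(i)},\boldsymbol{u}_2^{(i)}\}\cap\Span\{\boldsymbol{u}_1^{(j)},\boldsymbol{u}_2^{(j)}\}=P^{(i)}\cap P^{(j)}=\{\boldsymbol 0\}$, since $P^{(i)}$ and $P^{(j)}$ are distinct members of the $2$-spread $\mathcal{P}$.

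The real work is \textbf{C}.3, and it is exactly what the Trim step is designed to enforce. Fix $i<j<t$ in $[L]$. First, $\boldsymbol{u}_a^{(i)}$ and $\boldsymbol{u}_b^{(j)}$ are linearly independent for all $a,b\in\{0,1,2\}$ (a nonzero vector of $P^{(i)}$ and a nonzero vector of $P^{(j)}$ with $P^{(i)}\cap P^{(j)}=\{\boldsymbol 0\}$), so $\Span\{\boldsymbol{u}_a^{(i)},\boldsymbol{u}_b^{(j)}\}$ is a plane and it suffices to show $\boldsymbol{u}_c^{(t)}\notin\Span\{\boldsymbol{u}_a^{(i)},\boldsymbol{u}_b^{(j)}\}$ for every $c\in\{0,1,2\}$. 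I would then trace the history of the set $A$ from which iteration $t$ draws its subspaces: since $A$ is selected at iteration $t$, it must have survived every Trim at iterations $2,\dots,t-1$, so it is still present in $\mathcal{M}$ when the Trim at the end of iteration $j$ runs (that Trim does run, as $j\geq 2$ and $\mathcal{M}\neq\varnothing$ at that moment, with $A$ itself witnessing nonemptiness). At iteration $j$ the Trim loops over all inner indices in $[j-1]$, in particular over $i$, and deletes from $A$ every one-dimensional subspace lying in $\Span\{\boldsymbol{u}_a^{(j)},\boldsymbol{u}_b^{(i)}\}$ for all $a,b\in\{0,1,2\}$; as $(a,b)$ ranges over $\{0,1,2\}^2$ these are precisely the planes $\Span\{\boldsymbol{u}_\alpha^{(i)},\boldsymbol{u}_\beta^{(j)}\}$. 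Hence each surviving subspace, in particular each of the three $\langle\boldsymbol{u}_0^{(t)}\rangle,\langle\boldsymbol{u}_1^{(t)}\rangle,\langle\boldsymbol{u}_2^{(t)}\rangle$ chosen at iteration $t$, avoids all these planes, giving $\boldsymbol{u}_c^{(t)}\notin\Span\{\boldsymbol{u}_a^{(i)},\boldsymbol{u}_b^{(j)}\}$. Adjoining $\boldsymbol{u}_c^{(t)}$ to the independent pair $\boldsymbol{u}_a^{(i)},\boldsymbol{u}_b^{(j)}$ then yields $\dim\Span\{\boldsymbol{u}_a^{(i)},\boldsymbol{u}_b^{(j)},\boldsymbol{u}_c^{(t)}\}=3$, which is \textbf{C}.3.

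The hard part will be precisely this bookkeeping: for every triple $i<j<t$ one must argue cleanly that the plane-removal associated with the pair $(i,j)$ has already been applied—by the iteration-$j$ Trim, through its inner index $i$—to the set eventually chosen at iteration $t$, and that this Trim actually fires. Everything else (\textbf{C}.1, \textbf{C}.2, the solvability of the defining relation for $\boldsymbol{u}_0^{(i)}$, and the nonemptiness conditions keeping Step 2 executable) is routine once the notation $P^{(i)}$ is in place.
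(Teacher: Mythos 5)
Your proposal is correct and follows essentially the same route as the paper's proof: conditions \textbf{C}.1 and \textbf{C}.2 come from the chosen planes being distinct members of the $2$-spread, and \textbf{C}.3 comes from observing that the Trim of the middle round $j$, with inner index $i$, has already deleted every one-dimensional subspace of the planes $\Span\{\boldsymbol{u}_a^{(i)},\boldsymbol{u}_b^{(j)}\}$ from the set later selected at round $t$, so the lines picked there avoid those planes. Your only additions are the (correct) side-checks that Step 2 remains executable and that any three distinct lines of a plane admit generators satisfying $\boldsymbol{u}_0^{(i)}=\boldsymbol{u}_1^{(i)}-\boldsymbol{u}_2^{(i)}$, which the paper leaves implicit.
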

			\begin{proof}
			  
				Note that 
				$\Span\{\boldsymbol{u}_1^{(1)},\boldsymbol{u}_2^{(1)}\},\dots,\Span\{\boldsymbol{u}_1^{(L)},\boldsymbol{u}_2^{(L)}\}$ are distinct two-dimensional subspaces from the 
				$2$-spread 
				$\mathcal{P}$ by Step 2 of Algorithm \ref{alg1}.  $\{\boldsymbol{u}_1^{(1)},\boldsymbol{u}_2^{(1)},\boldsymbol{u}_1^{(2)},\boldsymbol{u}_2^{(2)},
				\dots,\boldsymbol{u}_1^{(L)},\boldsymbol{u}_2^{(L)}\}$ satisfies the conditions \textbf{C}.1 and \textbf{C}.2 in Lemma \ref{lem5}.
				
				We then verify that it satisfies \textbf{C}.3. For any positive integers $1\leq \alpha\textless\beta\textless \gamma\leq L$ and $a,b,c\in
				\{0,1,2\}$, we need to prove 
				that non-zero vectors 
				$\boldsymbol{u}_a^{(\alpha)},\boldsymbol{u}_b^{(\beta)},\boldsymbol{u}_c^{(\gamma)}$ are linear independent.
				
					After every Step 3 in Algorithm \ref{alg1}, we call the 
				past Step 2 and Step 3 the $i$-th round. Note that $\boldsymbol{u}_a^{(\alpha)}\in \Span\{\boldsymbol{u}_1^{(\alpha)},\boldsymbol{u}_2^{(\alpha)}\},\boldsymbol{u}_b^{(\beta)}\in \Span\{\boldsymbol{u}_1^{(\beta)},\boldsymbol{u}_2^{(\beta)}\}$
				and 
				$\Span\{\boldsymbol{u}_1^{(\alpha)},\boldsymbol{u}_2^{(\alpha)}\}, \Span\{\boldsymbol{u}_1^{(\beta)},\boldsymbol{u}_2^{(\beta)}\}$
			 are distinct two-dimensional subspaces from the 
				$2$-spread 
				$\mathcal{P}$.
				Vectors $\boldsymbol{u}_a^{(\alpha)},\boldsymbol{u}_b^{(\beta)}$ are linear independent. By Step 3 of the $\beta$-th round, $\Span\{\boldsymbol{u}_c^{(\gamma)}\}\not\in 
				\left[\Span\{\boldsymbol{u}_a^{(\alpha)},\boldsymbol{u}_b^{(\beta)}\}\atop 1\right]$. 
				Thus,  
				$\boldsymbol{u}_a^{(\alpha)},\boldsymbol{u}_b^{(\beta)},\boldsymbol{u}_c^{(\gamma)}$ are 
				linear 
				independent . The desired result follows.
			\end{proof}
			
			\begin{theorem}\label{thm2}
				The output of Algorithm  \ref{alg1} is a set of vectors:
				$\{\boldsymbol{u}_1^{(1)},\boldsymbol{u}_2^{(1)},\boldsymbol{u}_1^{(2)},\boldsymbol{u}_2^{(2)},\dots,\boldsymbol{u}_1^{(L)},\boldsymbol{u}_2^{(L)}\}\subseteq \mathbb{F}_{q}^4$,
				where $L\geq \max\{\lceil \frac{\sqrt{2}}{3}q\rceil,3\}$ when $q\geq 4$. 
				
				By Proposition \ref{prop1} and Lemma \ref{lem5}, there exists LRCs with parameters 
				$(n=3L\geq \max\{3\lceil\frac{\sqrt{2}}{3}q\rceil,9\},k=2L-4,d=7$ or $8,r=2)_q$ attaining the bound \eqref{eq3} when $q\geq 4$. 
				
				In particular, the minimum distance $d=7$ when $n\textgreater q+4$ by Corollary \ref{cor1}, in which case it is an almost optimal LRC and obtains the best possible minimum distance if its other parameters and the field size $q$ are unchanged. 
			\end{theorem}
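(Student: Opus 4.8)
The three assertions of the theorem are layered: the core is the lower bound $L\geq\max\{\lceil\frac{\sqrt2}{3}q\rceil,3\}$ on the number of rounds the algorithm performs, and once this is in hand the existence statement is immediate from Proposition \ref{prop1} together with Lemma \ref{lem5}, while the sharpening to $d=7$ for $n>q+4$ is exactly Corollary \ref{cor1}. So I would concentrate on the combinatorial bound on $L$.

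The plan is a bookkeeping argument on the set family $\mathcal{M}$. Since $\mathcal{P}$ is a $2$-spread of $\mathbb{F}_q^4$ we have $|\mathcal{M}_0|=q^2+1$ by Corollary \ref{cor2}, and every spread element carries exactly $q+1$ one-dimensional subspaces, so each initial set has $q+1$ elements; because Step 3 removes any set once its cardinality drops below $3$, every set that reaches Step 2 still has at least $3$ elements, so the algorithm is always executable. In each round exactly one set is deleted by Step 2 (the chosen $A$), and Step 3 deletes precisely those surviving sets whose cardinality has fallen below $3$. Writing $s_i$ for the number of sets deleted in Step 3 of round $i$ and $K=\sum_{i=1}^L s_i$, the fact that the algorithm halts only when $\mathcal{M}=\varnothing$ forces each of the $q^2+1$ sets to be deleted exactly once, so $L+K=q^2+1$. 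Hence a lower bound on $L$ is equivalent to an upper bound on $K$.

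To bound $K$ I would double count the one-dimensional subspaces removed during all Step-3 trims. For $j<i$ and $a,b\in\{0,1,2\}$ set $\Lambda_{a,b}^{i,j}=\Span\{\boldsymbol{u}_a^{(i)},\boldsymbol{u}_b^{(j)}\}$. The key geometric observation is that $\Lambda_{a,b}^{i,j}$ is never a member of the spread: if it equalled some $P_B$, then the nonzero vector $\boldsymbol{u}_a^{(i)}\in P_i\cap P_B$ would contradict $P_i\cap P_B=\{\boldsymbol{0}\}$. Consequently $\Lambda_{a,b}^{i,j}$ meets every spread element in at most a one-dimensional subspace, so it removes at most one point from each surviving set; moreover it contains the two distinct points $\Span\{\boldsymbol{u}_a^{(i)}\}\subseteq P_i$ and $\Span\{\boldsymbol{u}_b^{(j)}\}\subseteq P_j$, so at most $q-1$ of its $q+1$ points lie on spread elements other than $P_i,P_j$. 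As there are $9$ choices of $(a,b)$ and $\binom{L}{2}$ index pairs $(i,j)$, the total number of one-dimensional subspaces ever deleted by trimming is at most $9(q-1)\binom{L}{2}$. On the other hand a set is deleted in Step 3 only after at least $q-1$ of its points have been removed, and since the spread partitions the one-dimensional subspaces of $\mathbb{F}_q^4$ the points removed from distinct sets are themselves distinct; hence $K(q-1)$ is at most the total number of removed points, giving $K\leq 9\binom{L}{2}=\frac{9L(L-1)}{2}$. Substituting into $L+K=q^2+1$ yields $q^2+1\leq\frac{9L^2-7L}{2}$, whence $9L^2>2q^2$ and therefore $L>\frac{\sqrt2}{3}q$, i.e. $L\geq\lceil\frac{\sqrt2}{3}q\rceil$.

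It remains to establish $L\geq 3$ when $q\geq 4$, which matters only for the few $q$ where $\lceil\frac{\sqrt2}{3}q\rceil<3$. After round $1$ no trimming has occurred and $|\mathcal{M}|=q^2\geq 16$, so a second round runs; in round $2$ only the nine lines $\Lambda_{a,b}^{2,1}$ can trim, removing at most $9(q-1)$ points in total, so at most $9$ sets are deleted in Step 3, leaving at least $(q^2-1)-9=q^2-10\geq 6$ sets and forcing a third round. Combining the two estimates gives $L\geq\max\{\lceil\frac{\sqrt2}{3}q\rceil,3\}$, after which the existence of the stated LRCs and the value $d=7$ for $n>q+4$ follow by quoting Proposition \ref{prop1}, Lemma \ref{lem5} and Corollary \ref{cor1}. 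I expect the only delicate point to be the double count of the third paragraph: one must check that the factor $q-1$ appearing both in the per-line bound and in the deletion threshold cancels exactly, since this cancellation is precisely what produces the constant $\sqrt2/3$ rather than a weaker one.
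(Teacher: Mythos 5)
Your core argument---the double count giving $L\geq\max\{\lceil\frac{\sqrt{2}}{3}q\rceil,3\}$---is correct and is in substance the same as the paper's. The paper likewise observes that every set deleted by trimming must have lost at least $(q+1)-2=q-1$ of its points, that each line $\Span\{\boldsymbol{u}_a^{(i)},\boldsymbol{u}_b^{(j)}\}$ can contribute at most $q-1$ such points (two of its $q+1$ points lying in the spread elements chosen at rounds $i$ and $j$), and sums over the $9\binom{L}{2}$ lines to get $(q^2+1-L)(q-1)\leq\frac{9L(L-1)}{2}(q-1)$ --- exactly your $L+K=q^2+1$ with $K\leq 9\binom{L}{2}$. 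Your separate two-round argument for $L\geq 3$ is only cosmetically different: the paper obtains $L\geq 3$ for $q=4$ by plugging $L=2$ into the same quadratic inequality. Your observation that the lines are never spread elements (so they meet each surviving set in at most one point) is the right justification and is implicitly used in the paper's inequality \eqref{eq6}.

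The one substantive omission is the clause ``attaining the bound \eqref{eq3}.'' This is not immediate from Proposition \ref{prop1} and Lemma \ref{lem5}: since the constructed code has $k=2L-4=\frac{2n}{3}-4$, one must additionally verify that $\lceil\log_q\big(q+q(q-1)(n-2)\big)\rceil=4$ for all the lengths produced. The paper does this by direct computation for $q=4,5,7,8$ and, for $q\geq 9$, by noting $3\lceil\frac{\sqrt{2}}{3}q\rceil\geq q+4$ so that the argument of the logarithm exceeds $q^3$ (it is at most $q^4$ by Lemma \ref{lem4}). Without this step the assertion that the dimension meets the upper bound is unproved; it is routine, but it needs to be included.
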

			\begin{proof}
			
%
%
%
 
Since $|\mathcal{M}_0|=q^2+1$, we have $2\leq L\leq q^2+1$ when Algorithm \ref{alg1} stops. 
 The set family	$\mathcal{M}_0$ has the following form 
 \begin{equation*}\begin{split}
 \left\{\left[\Span\{\boldsymbol{u}_1^{(1)},\boldsymbol{u}_2^{(1)}\}\atop 1\right],\left[\Span\{\boldsymbol{u}_1^{(2)},\boldsymbol{u}_2^{(2)}\}\atop 1\right],\dots, \right.\\
\left.\left[\Span\{\boldsymbol{u}_1^{(L)},\boldsymbol{u}_2^{(L)}\}\atop 1\right],B_1,B_2,\dots,B_{q^2+1-L} \right\}.
			\end{split}
		\end{equation*}
				Since the sets in $\mathcal{M}_0$ are pairwise disjoint, for any $j\textless i\in [L]$ and $a,b\in \{0,1,2\}$,  $\Span\{\boldsymbol{u}_a^{(i)}\}$ and $\Span\{\boldsymbol{u}_b^{(j)}\}\notin B_1,B_2,\dots,B_{q^2+1-L}$. Thus, we have
				\begin{equation}\label{eq6}
				\bigg|\bigcup\limits_{t=1}^{q^2+1-L} \bigg(B_t\cap \left[\Span\{\boldsymbol{u}_a^{(i)},\boldsymbol{u}_b^{(j)}\}\atop 1\right]\bigg)\bigg|\leq q+1-2.
				\end{equation}
				 Since the set family $\mathcal{M}$ must be empty when Algorithm \ref{alg1} stops, for any $t\in [q^2+1-L]$, we have $$\left|B_t\backslash \bigg(\bigcup\limits_{i=2}^{L}\bigcup\limits_{j=1}^{i-1}\bigcup\limits_{a,b\in \{0,1,2\}} \left[\Span\{\boldsymbol{u}_a^{(i)},\boldsymbol{u}_b^{(j)}\}\atop 1\right]\bigg)\right|\textless 3.$$
				 Since $|B_t|=q+1$, we have
				 $$\left|B_t\cap \bigg( \bigcup\limits_{i=2}^{L}\bigcup\limits_{j=1}^{i-1}\bigcup\limits_{a,b\in \{0,1,2\}} \left[\Span\{\boldsymbol{u}_a^{(i)},\boldsymbol{u}_b^{(j)}\}\atop 1\right]\bigg)\right|\geq q+1-2.$$
				 Since $B_1,B_2,\dots,B_{q^2+1-L}$ are pairwise disjoint sets, we have
				  \begin{scriptsize}
				  	\begin{align*}
				 	\left|\bigcup\limits_{t=1}^{q^2+1-L}\Bigg(B_t\cap \bigg( \bigcup\limits_{i=2}^{L}\bigcup\limits_{j=1}^{i-1}\bigcup\limits_{a,b\in \{0,1,2\}} \left[\Span\{\boldsymbol{u}_a^{(i)},\boldsymbol{u}_b^{(j)}\}\atop 1\right]\bigg)\Bigg)\right|\\
				 	\geq (q^2+1-L)(q-1),
				 	\end{align*}
				 \end{scriptsize}
			 thus,
				 \begin{scriptsize}
		\begin{align}\label{eq7}
				 	\left| \bigcup\limits_{i=2}^{L}\bigcup\limits_{j=1}^{i-1}\bigcup\limits_{a,b\in \{0,1,2\}}
				 	\bigcup\limits_{t=1}^{q^2+1-L} \bigg(B_t\cap\left[\Span\{\boldsymbol{u}_a^{(i)},\boldsymbol{u}_b^{(j)}\}\atop 1\right]\bigg)\right|\notag\\
				 \geq (q^2+1-L)(q-1).			
			 	 \end{align}
				  \end{scriptsize}
				 
		Note that  \begin{small}
			\begin{align*}
			 \sum\limits_{i=2}^{L}\sum\limits_{j=1}^{i-1}\sum\limits_{a,b\in \{0,1,2\}}
			\bigg|\bigcup\limits_{t=1}^{q^2+1-L}\bigg(B_t\cap\left[\Span\{\boldsymbol{u}_a^{(i)},\boldsymbol{u}_b^{(j)}\}\atop 1\right]\bigg)\bigg|\\
			\geq \bigg| \bigcup\limits_{i=2}^{L}\bigcup\limits_{j=1}^{i-1}\bigcup\limits_{a,b\in \{0,1,2\}}
			\bigcup\limits_{t=1}^{q^2+1-L} \bigg(B_t\cap
			\left[\Span\{\boldsymbol{u}_a^{(i)},\boldsymbol{u}_b^{(j)}\}\atop 1\right]\bigg)\bigg|.
			\end{align*}
	\end{small}
		By inequalities \eqref{eq6} and \eqref{eq7}, we have
				$$(q^2+1-L)(q-1)\leq 
				\sum_{i=2}^{L}\big(9(i-1)(q-1)\big),$$
				
				$$(q^2+1)(q-1)\leq (q-1)\frac{9(L-1)L}{2}+(q-1)L,$$
				
				$$\frac{2(q^2+1)}{9}\leq L^2-L+\frac{2}{9}L. 
				$$
				
				When $q=4$, we have $L\geq 3$.
				
				When $q\geq 5$, we have  $\frac{2(q^2+1)}{9}\leq L^2$.
				
				Hence $L\geq \max\{\lceil\frac{\sqrt{2}}{3}q\rceil,3\}$ when $q\geq 4$.

				Let  $\mathcal{C}$ denote the corresponding LRC constructed via Lemma \ref{lem5}. Its length $n=3L\geq \max\{3\lceil\frac{\sqrt{2}}{3}q\rceil,9\}$. Its dimension 
				\begin{footnotesize}
				$k=2L-4\leq \frac{2n}{3}-\lceil\log_q\big(q+q(q-1)(\max\{3\lceil\frac{\sqrt{2}}{3}q\rceil,9\}-2)\big)\rceil$\end{footnotesize} by the bound \eqref{eq3}.
				
				When $q=4,5,7,8$,  it is easy to verify that $$\lceil\log_q\big(q+q(q-1)(\max\{3\lceil\frac{\sqrt{2}}{3}q\rceil,9\}-2)\big)\rceil=4.$$
				
				When $q\geq 9$, we have $3\lceil\frac{\sqrt{2}}{3}q\rceil\geq q+4$, so \begin{align*}
					&\lceil\log_q\big(q+q(q-1)(\max\{3\lceil\frac{\sqrt{2}}{3}q\rceil,9\}-2)\big)\rceil\\
				&\geq \lceil\log_q\big(q+q(q-1)(q+4-2)\big)\rceil=4.
				\end{align*}
				Thus, the dimension of $\mathcal{C}$ attains the bound \eqref{eq3} when $q\geq 4$.
			\end{proof}
				%
				%
			By Algorithm \ref{alg1} and Lemma \ref{lem5}, we give two examples below with the help of the computer algebra system Magma\cite{bosma1997magma}. The minimum distances of these LRCs are also verified by Magma\cite{bosma1997magma}.
			\begin{example}\label{exma1}
				When $q=4$, let $\gamma\in \mathbb{F}_{4}$ be a root of $x^2+x+1\in 
				\mathbb{F}_2[x]$, then $H_1$ is a 
				parity-check matrix of an almost optimal LRC with parameters
				$(9,2,7,2)_4$. By Corollary \ref{cor1}, this LRC obtains the best possible minimum distance if its other parameters and the field size $q$ are unchanged.
				\setlength{\arraycolsep}{4.0pt}
				$$H_1=\left[
				\begin{array}{ccc|ccc|ccc}
					1&1&1&0&0&0&0&0&0\\
					0&0&0&1&1&1&0&0&0\\
					0&0&0&0&0&0&1&1&1\\
					\hline
					1&\gamma^{2}&0  &0&\gamma^{2}&0      &0&\gamma^{2}&0\\
					0&\gamma&0   &1&0&0    &0&0&0\\
					0&0&0    &0&1&0     &\gamma&\gamma^{2}&0\\
					0&\gamma&0    &0&\gamma&0      &0&\gamma^{2}&0\\
				\end{array}\right]$$
			\end{example}
		
			\begin{example}\label{exam2}
				When $q=7$, let $\mathbb{F}_7=\{0,1,2,3,4,5,6\}$. We obtained an
				$(18,8,7,2)_7$ 
				almost optimal LRC with $H_2$ as a parity-check matrix. By Corollary \ref{cor1}, this LRC obtains the best possible minimum distance if its other parameters and the field size $q$ are unchanged.
				\setcounter{MaxMatrixCols}{20}
				\setlength{\arraycolsep}{3.0pt}
				$$H_2=\left[
				\begin{array}{ccc|ccc|ccc|ccc|ccc|ccc}
					1&1&1&0&0&0&0&0&0&0&0&0&0&0&0&0&0&0\\
					0&0&0&1&1&1&0&0&0&0&0&0&0&0&0&0&0&0\\
					0&0&0&0&0&0&1&1&1&0&0&0&0&0&0&0&0&0\\
					0&0&0&0&0&0&0&0&0&1&1&1&0&0&0&0&0&0\\
					0&0&0&0&0&0&0&0&0&0&0&0&1&1&1&0&0&0\\
					0&0&0&0&0&0&0&0&0&0&0&0&0&0&0&1&1&1\\
					\hline
					0&5&0  &6&1&0  &5&3&0  &2&3&0  &5&4&0  &4&1&0\\
					2&3&0  &4&2&0  &5&4&0  &2&5&0  &2&6&0  &0&2&0\\
					0&6&0  &4&3&0  &6&2&0  &2&1&0  &3&0&0  &6&6&0\\
					2&1&0  &0&3&0  &0&4&0  &0&0&0  &3&2&0  &6&5&0\\
				\end{array}\right]$$
			\end{example}
			
			\section{Conclusions}
			In this paper, we focus on LRCs with minimum distance $7$ and locality $2$. 
			We prove an upper bound on the code length of almost optimal LRCs with $d=7$, $r=2$ and disjoint local repair groups, which is better than the bounds proved in \cite{guruswami2019long} and \cite{chen2020improved} in this case. 
			Almost optimal LRCs with $d=7$, $r=2$ and length $n\geq 3\lceil\frac{\sqrt{2}q}{3}\rceil$ are constructed by an algorithm.
			
			However, there is still a gap between the code length of our construction and the upper bound on the code length. More works need to be done to improve the upper bound or the construction.
			\bibliography{Ref}{}
			 \bibliographystyle{IEEEtran} 
		\end{multicols}
	\end{document}